\documentclass[11pt]{article} 

\usepackage[utf8]{inputenc} 

\usepackage{geometry} 
\usepackage{graphicx} 

\usepackage{booktabs} 
\usepackage{array} 
\usepackage{paralist} 
\usepackage{verbatim} 
\usepackage{subfig} 
\usepackage{amssymb}
\usepackage{amsmath}
\usepackage{amsthm}
\usepackage{algorithm}
\usepackage{algpseudocode}
\usepackage{fancyhdr} 
\usepackage{sectsty}
\allsectionsfont{\sffamily\mdseries\upshape} 

\usepackage[nottoc,notlof,notlot]{tocbibind} 
\usepackage[titles,subfigure]{tocloft} 

\newtheorem{theorem}{Theorem}
\newtheorem{heuristic}{Heuristic}

\title{Quantum Approaches to Learning Parity with Noise}
\author{Daniel Shiu}

\begin{document}
\maketitle

\section{Introduction}

The ``learning parity with noise'' (LPN) problem is a well-established computational challenge whose difficulty is critical to the security of several post-quantum cryptographic primitives such as HQC \cite{gaborit2025hamming} and Classic McEliece \cite{albrecht2022classic}. Classically, the best-known attacks involve information set decoding methods \cite{prange1962use}, \cite{stern1988method}, \cite{bernstein2008attacking} which are exponential in complexity for parameterisations of interest. In this paper we investigate whether quantum methods might offer alternative approaches. Our chosen line of inquiry is inspired by Regev's relating of certain lattice problems to the hidden dihedral subgroup problem \cite{regev2004quantum}. We use neighbourhoods of $\mathbb F_2^n$ to produce a function close to fulfilling Simon's promise \cite{simon1997power} with difference equal to the secret parity vector. Although unlikely to recover the secret parity vector directly, running Simon's algorithm \cite{simon1997power} produces new LPN samples. This provides the hope that we might be able to produce enough new samples to ignore one or more variables and iteratively reduce the problem.

\subsection{Learning Parity with Noise}

The learning parity with noise problem is the determination of $m$ unknown Boolean variables from $n$ samples (linear equations) where each sample is true with probability $1/2<1-p<1$. More formally, let $\mathbf s=(s_1,\ldots,s_m)^T\in \mathbb F_2^m$ be fixed and unknown. A sample is defined $(\mathbf a_i, b_i=\mathbf a_i\cdot\mathbf s\oplus e_i)$ where $\mathbf a_i=(a_{i,1},\ldots,a_{i,m})$ is a known vector sampled uniformly and independently at random from $\mathbb F_2^m$, and $e_i$ is an unknown independent Bernoulli random variable with parameter $p$. The learning parity with noise problem is to determine $\mathbf s$ with confidence, given $n$ samples.

It is natural to write a collection of samples $(a_{i,j})$ as an $n\times m$ matrix $A$ over $\mathbb F_2$ in which case we can summarise our problem as the determination of $\mathbf s$ from $A$ and $\mathbf b$ where
$$A\mathbf s\oplus\mathbf e=\mathbf b\pmod 2.$$
There is a clear similarity with the Learning with Errors (LWE) problem
$$A\mathbf s+\mathbf e=\mathbf b\pmod q$$ 
which instead works modulo $q$ and with $e_i$ drawn from a narrow probability distribution centred on 0.

The probabilistic nature of the problem means that it can only be answered with likelihood and that likelihood will be dependent on the parameters $(m,n,p)$. For certain parameters, the determination of the most likely $\mathbf s$ is believed to be hard and so the problem (again similarly to LWE) has been a popular reduction used in the design of public key cryptography schemes, particularly those of ``code-based cryptography''. The LPN problem then, is a rephrasing of the challenge of decoding a random, linear, binary code. For cryptographic applications, it can sometimes be important to consider a lightly modified version of LPN where $\mathbf e$ does not consist of independent Bernoulli variables, but instead is a uniformly random $\mathbb F_2^n$ vector of Hamming weight exactly $k$. We will often elide the distinction between the LPN problem with fixed error weight $k$ and the variable weight error with parameter $p=k/n$.

\subsection{Covering codes}

Taking inspiration from Regev \cite{regev2004quantum}, we consider a quantum super-position of our ambient space (for us this is the column space of $A$ augmented with the vector $\mathbf b$). We then aim to divide the space into neighbourhoods such that nearby elements likely belong to the same neighbourhood. The division should be such that it is computationally easy to determine a tag for each element corresponding to the neighbourhood in which it lies. If we write $f$ for the tagging function, our hope is that $f(\mathbf x)= f(\mathbf x\oplus\mathbf e)$ for a large proportion of $\mathbf x$.

A natural thought here is to use a covering code \cite{cohen1997covering}, where $f(\mathbf x)$ is some representation of the nearest element of the covering code. There are certainly covering codes for various $n$ where a nearest covering code element can be trivially computed. In later sections we examine various simple covering codes and investigate how likely the equation $f(\mathbf x)= f(\mathbf x\oplus\mathbf e)$ is to hold for various parameterisations.

\subsection{Simon's algorithm\label{sub:simon}}

If we had an ideal tagging function where $f(\mathbf x)= f(\mathbf x\oplus\mathbf e)$ for all $\mathbf x$, our problem would be easily solvable with the following quantum method due to Simon. In the following, we will sometimes treat $\mathbb F_2^m\otimes\mathbb F_2$ as column vectors over $\mathbb F_2^{m+1}$ in the obvious way. We note that, by initialising $(m+1)$ qubits to the all-zeroes state and applying a Hadamard gate to each, we have a superposition of all elements $(\mathbf u,v)$ of $\mathbb F_2^m\otimes\mathbb F_2$. With a simple circuit we can map this superposition to $A\mathbf u\oplus v\mathbf b\in\mathbb F_2^n$ in the contents of an $n$-qubit register. We can then use this register to compute $f(A\mathbf u\oplus v\mathbf b)$. Writing $F(\mathbf u,v)=f(A\mathbf u\oplus v\mathbf b)$ we have the superposition (omitting/uncomputing the $n$ intermediate qubit register and omitting normalising factor throughout).
$$\sum_{\mathbf u,v}\left|\mathbf u, v\right\rangle\left|F(\mathbf u,v)\right\rangle.$$
We  now note that if we write $(\mathbf u',v')=(\mathbf u,v)\oplus (\mathbf s, 1)$ we have $A(\mathbf u')\oplus v'\mathbf b=A\mathbf u\oplus v\mathbf b\oplus\mathbf e$ and hence (making use of our assumption on $f$) $F(\mathbf u',v')=F(\mathbf u,v)$. This recalls the ``promise'' of Simon's algorithm with difference $(\mathbf s,1)$.

Following Simon, we therefore apply Hadamard to our first $(m+1)$ qubits to obtain the superposition
$$\sum_{\mathbf w\in\mathbb F_2^{m+1}}\sum_{\mathbf u,v}(-1)^{\mathbf w\cdot(\mathbf u,v)}|\mathbf w\rangle\left|F(\mathbf u,v)\right\rangle.$$
Measuring the $F$-register\begin{footnote}{This measurement can be deferred or omitted without affecting the output. It is included to elucidate the algorithm).}\end{footnote} will return some tag value $t$ in the image space of $F$ and collapse us to the superposition
\begin{eqnarray}&&\sum_{\mathbf w\in\mathbb F_2^{m+1}}\sum_{\mathbf u,v:F(\mathbf u,v)=t}(-1)^{\mathbf w\cdot(\mathbf u,v)}|\mathbf w\rangle\left|t\right\rangle\\&=&\sum_{\mathbf w\in\mathbb F_2^{m+1}}\sum_{\mathbf u:F(\mathbf u,0)=t}\left((-1)^{\mathbf w\cdot(\mathbf u,0)}+(-1)^{\mathbf w\cdot(\mathbf u\oplus\mathbf s,1)}\right)|\mathbf w\rangle|t\rangle\\
&=&\sum_{\mathbf w\in\mathbb F_2^{m+1}}\sum_{\mathbf u:F(\mathbf u,0)=t}(-1)^{\mathbf w\cdot(\mathbf u,0)}\left(1+(-1)^{\mathbf w\cdot(\mathbf s,1)}\right)|\mathbf w\rangle|t\rangle\\
&=&2\sum_{\mathbf w:\mathbf w\cdot (\mathbf s,1)=0}\sum_{\mathbf u:F(\mathbf u,0)=t}(-1)^{\mathbf w\cdot(\mathbf u,0)}|\mathbf w\rangle|t\rangle\label{eq:measure}\
\end{eqnarray}
and measuring the first register gives some vector $\mathbf w_0$ such that $\mathbf w_0\cdot(\mathbf s,1)=0$. Repeating this $k>m$ times will give a tall matrix 
$$W=\begin{pmatrix}\mathbf w_0^T\\ \mathbf w_1^T\\ \vdots\\ \mathbf w_k^T\end{pmatrix}$$
 with $(\mathbf s,1)$ in its right nullspace. We can classically recover this nullspace using Gaussian elimination and hence $\mathbf s$.

Alert and informed readers will note that Simon's promise is usually defined as an ``if and only if'' condition, which we have not used here. It also seems unlikely that we will produce a tagging function that always keeps Simon's promise. In the next section we shall consider what happens to Simon's algorithm where the promise is not absolute.

\section{Simon's algorithm with broken promises}

In this section we consider the effects of running Simon's algorithm on a function $F:\mathbb F_2^{m+1}\to\mathbb C$ where Simon's promise ($F(\mathbf x)=F(\mathbf y)\iff \mathbf y=\mathbf x\oplus\mathbf c$ holds for some constant vector $\mathbf c$) is often, but not universally, fulfilled. We will often refer to the description in the previous section in which $\mathbf c=(\mathbf s,1)$

\subsection{Not promising ``only if''}

When we read \ref{sub:simon} with the ``only if'' condition we note that the inner summation in equation (\ref{eq:measure}) will be over a single value $\mathbf u$, and we deduce that all possible $\mathbf w$ satisfying $\mathbf w\cdot(\mathbf s,1)=0$ will be sampled equiprobably. It follows that $\mathbf w_0$ is uniformly sampled from vectors orthogonal to $(\mathbf s,1)$. Similarly, all of the rows of our tall matrix $W$ are sampled uniformly and independently at random from these orthogonal vectors and so with significant probability, after sampling $M>m$ rows we have a rank $m$ matrix.

If we omit the ``only if'' condition then the inner sum may have more than one term and there may be further cancellation or reinforcement of the phase for each $\mathbf w$. This would mean that we would not have the uniform sampling of rows of $W$. If we consider the case where there is additional structure to our function $F$ so that it is constant on cosets of the span or 2 or more vectors, then our sampled $\mathbf w_i$ will all be orthogonal to vectors in this span. Consequently, $W$ will never have rank greater than $m+1-d$ where $d$ is the dimension of the spanning set. Another way of viewing this is that Simon's algorithm is a recipe for solving the more general hidden subgroup problem for $C_2^m$, where a subgroup of size $2^d$ is recovered as the rank $2^d$ nullspace of $W$.

Of course, if $d$ is of moderate size, we can still recover the single nullspace vector $(\mathbf s,1)$, which should be distinguishable by the low Hamming weight of its associated $\mathbf e$.

For our strategy, the concern is that given $F$ which tags neighbourhoods, there may be an issue if neighbourhoods are so large that many relatively small Hamming weight vectors in the span of $A$ are likely to occur. This would give rise to $(\mathbf s,1)$ having to be recovered from a large nullspace. Note though that for Hamming weights greater than our causal error $\mathbf e$, the ``if'' part of the promise is likely to hold less often than for $\mathbf e$ and this in turn may help $\mathbf e$ be easier to recover than its heavier peers.

\subsection{Promising for a large proportion of $\mathbf x$ and $\mathbf y$}

A more significant issue is when there are cases such that $F(\mathbf x)\neq F(\mathbf x\oplus \mathbf c)$, even if this is true for a large proportion of $\mathbf x$. Exceptions prevent the complete cancellation of vectors that are not orthogonal to $\mathbf c$ and we cannot discount the possibility that $\mathbf c$ does not lie in the nullspace of our matrix of measurements. Nevertheless, we may still be able to recover information.

We divide the preimage set of a value $t_i$ in the range of $F$ into two disjoint sets $S_0$ and $S_1$ where $S_0$ is the set of paired values and $S_1$ is the set of orphaned values i.e. $\mathbf x\in S_0\Rightarrow \mathbf x\oplus\mathbf c\in S_0$ but $\mathbf x\in S_1\Rightarrow \mathbf x\oplus\mathbf c\not\in S_0\cup S_1$. If we run Simon's algorithm we get the state 
$$\sum_{\mathbf w\in\mathbb F_2^{m+1}}\sum_{\mathbf x\in\mathbb F_2^{m+1}}(-1)^{\mathbf w\cdot\mathbf x}|\mathbf w\rangle|F(\mathbf x)\rangle$$
We see that if we measure the $F$ register, we obtain $t_i$ with probability $(\#S_0+\#S_1)/2^{m+1}$ and obtain the (normalised) state
$$\frac1{\sqrt{\#S_0+\#S_1}}\sum_{\mathbf w\in\mathbb F_2^{m+1}}\sum_{\mathbf x\in S_0\cup S_1}(-1)^{\mathbf w\cdot\mathbf x}|\mathbf w\rangle|t_i\rangle.$$
We separate the state into the cases $\mathbf w\cdot\mathbf c=0$ and $\mathbf w\cdot\mathbf c=1$:
$$\frac1{\sqrt{\#S_0+\#S_1}}\left(\sum_{\mathbf w:\mathbf w\cdot\mathbf c=0}\sum_{\mathbf x\in S_0\cup S_1}(-1)^{\mathbf w\cdot\mathbf x}|\mathbf w\rangle+\sum_{\mathbf w:\mathbf w\cdot\mathbf c=1}\sum_{\mathbf x\in S_0\cup S_1}(-1)^{\mathbf w\cdot\mathbf x}|\mathbf w\rangle\right)|t_i\rangle.$$
However, 
$$\sum_{\mathbf w:\mathbf w\cdot\mathbf c=1}\sum_{\mathbf x\in S_0\cup S_1}(-1)^{\mathbf w\cdot\mathbf x}=\sum_{\mathbf w:\mathbf w\cdot\mathbf c=1}\sum_{\mathbf x\in S_1}(-1)^{\mathbf w\cdot\mathbf x}$$
due to the cancellation of the phases in $S_0$ when $\mathbf w\cdot\mathbf c=1$. Furthermore,
$$\sum_{\mathbf w:\mathbf w\cdot\mathbf c=1}|\sum_{\mathbf x\in S_1}(-1)^{\mathbf w\cdot\mathbf x}|^2\le \sum_{\mathbf w\in\mathbb F_2^{m+1}}|\sum_{\mathbf x\in S_1}(-1)^{\mathbf w\cdot\mathbf x}|^2$$
where the right hand side is $\#S_1$ by Parseval's theorem for the Walsh-Hadamard transform. We see that the probability of sampling $\mathbf w$ with $\mathbf w\cdot\mathbf c=1$ given that we have sampled $t_i$ is at most $\#S_1/(\#S_0+\#S_1)$. It follows that the probability of sampling $\mathbf w$ with $\mathbf w\cdot\mathbf c=0$ given that we have sampled $t_i$ is at least $\#S_0/(\#S_0+\#S_1)$.

Summing over all possible values of $t_i$, we can now see

\begin{theorem}\label{th:simon}
Let $\mathbf c\in\mathbb F_2^{m+1}$ be fixed and let  $F$ be a function from $\mathbb F_2^{m+1}\to \mathbb C$ where for a set $S$ of inputs $\mathbf x$ we have $F(\mathbf x)=F(\mathbf x \oplus\mathbf c)$. Then running the inner step of Simon's algorithm returns a vector $\mathbf w_0$ with $\mathbf w_0\cdot\mathbf c=0$ with probability at least $\# S/2^{m+1}$.
\end{theorem}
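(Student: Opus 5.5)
Fix a value $t$ in the range of $F$ and write $P_t=F^{-1}(t)$. Because $F(\mathbf x)=F(\mathbf x\oplus\mathbf c)$ for every $\mathbf x\in S$, we have $\mathbf x\oplus\mathbf c\in P_t$ whenever $\mathbf x\in S\cap P_t$. Define $S_0^{(t)}=\{\mathbf x\in P_t:\mathbf x\oplus\mathbf c\in P_t\}$ and $S_1^{(t)}=P_t\setminus S_0^{(t)}$. Then $S_0^{(t)}$ is closed under translation by $\mathbf c$, and every $\mathbf x\in S_1^{(t)}$ has $\mathbf x\oplus\mathbf c\notin P_t$. This is exactly the paired/orphaned decomposition used above, and $S\cap P_t\subseteq S_0^{(t)}$. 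Since the $P_t$ partition $\mathbb F_2^{m+1}$, summing over $t$ gives $\sum_t\#S_0^{(t)}\ge\#S$.

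Next I condition on the measured tag, as in the preceding discussion. The outcome $t$ occurs with probability $\#P_t/2^{m+1}$ (deferring the measurement does not change the distribution of $\mathbf w_0$). The post-measurement amplitude of $|\mathbf w\rangle$ is proportional to $\sum_{\mathbf x\in P_t}(-1)^{\mathbf w\cdot\mathbf x}$. For $\mathbf w\cdot\mathbf c=1$, pairing $\mathbf x$ with $\mathbf x\oplus\mathbf c$ inside $S_0^{(t)}$ cancels those terms, leaving only the sum over $S_1^{(t)}$. After the final Hadamards the normalisation is $2^{m+1}\#P_t$, so the conditional probability of $\mathbf w\cdot\mathbf c=1$ is
$$\frac{1}{2^{m+1}\#P_t}\sum_{\mathbf w\cdot\mathbf c=1}\Bigl|\sum_{\mathbf x\in S_1^{(t)}}(-1)^{\mathbf w\cdot\mathbf x}\Bigr|^2\le\frac{\#S_1^{(t)}}{\#P_t},$$
by Parseval, i.e. $\sum_{\mathbf w}|\sum_{\mathbf x\in T}(-1)^{\mathbf w\cdot\mathbf x}|^2=2^{m+1}\#T$. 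The main care point is getting this normalisation right, because the paper's displayed states omit constants. I would check it directly: the total squared amplitude over all $\mathbf w$ is $2^{m+1}\#P_t$ by the same Parseval identity. Hence, conditioned on $t$, the probability that $\mathbf w\cdot\mathbf c=0$ is at least $\#S_0^{(t)}/\#P_t$.

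Finally I uncondition:
$$\Pr[\mathbf w_0\cdot\mathbf c=0]\ge\sum_t\frac{\#P_t}{2^{m+1}}\cdot\frac{\#S_0^{(t)}}{\#P_t}=\frac{\sum_t\#S_0^{(t)}}{2^{m+1}}\ge\frac{\#S}{2^{m+1}}.$$
The only subtlety is that $S$ itself need not be closed under $\mathbf x\mapsto\mathbf x\oplus\mathbf c$. This is why I work with $S_0^{(t)}$ rather than $S$ directly; the inclusion $S\cap P_t\subseteq S_0^{(t)}$ handles it. The codomain being $\mathbb C$ plays no role, since only the partition into level sets matters.
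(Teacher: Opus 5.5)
Your proof is correct and follows the paper's argument essentially step for step: the paired/orphaned split of each level set, the cancellation of the paired terms when $\mathbf w\cdot\mathbf c=1$, the crude Parseval bound giving $\#S_1^{(t)}/\#P_t$, and averaging over the measured tag. You are more careful than the paper in two places, both of which the paper leaves implicit in ``summing over all possible values of $t_i$'': you track the normalisation $2^{m+1}\#P_t$ explicitly, and you state the inclusion $S\cap P_t\subseteq S_0^{(t)}$ that justifies the final inequality.
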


The theorem is slightly misleading in two different ways, which work in opposition. Firstly and helpfully, in the proof we have used the approximation
$$\sum_{\mathbf w:\mathbf w\cdot\mathbf c=1}|\sum_{\mathbf x\in S_1}(-1)^{\mathbf w\cdot\mathbf x}|^2\le \sum_{\mathbf w\in\mathbb F_2^{m+1}}|\sum_{\mathbf x\in S_1}(-1)^{\mathbf w\cdot\mathbf x}|^2$$
which is a very crude bound. Absent any further restrictions, we might expect the sum on the left hand side to only represent half of the sum on the right hand side. Heuristically then we might expect the probability quoted in theorem \ref{th:simon} to be roughly
$$\frac12+\frac{\#S}{2^{m+2}}$$
which is superior.

More subtly, as we divide into fewer neighbourhoods, the chance that $\mathbf w_0=\mathbf 0$ increases. These outcomes are counted in theorem \ref{th:simon}, but measuring $\mathbf 0$ in Simon's algorithm provides no useful information. If we have $N$ neighbourhoods and $Z_i$ vectors in each neighbourhood for $0\le i\le N-1$, then the chance of measuring $\mathbf 0$ is given by
$$\sum_{i=0}^{N-1}\frac{Z_i^2}{2^{2m+2}}.$$
Again arguing heuristically, for a uniform division into neighbourhoods, we expect $Z_i\approx 2^{m+1}/N$ which suggests that we should receive a zero measurement $1/N$ of the time. Discarding zero measurements and conditioning on non-zero measurement then gives

\begin{heuristic}\label{heu:simon}
Let $\mathbf c\in\mathbb F_2^{m+1}$ be fixed and let  $F$ be a function from $\mathbb F_2^{m+1}\to \mathbb C$ where for a set $S$ of inputs $\mathbf x$ we have $F(\mathbf x)=F(\mathbf x \oplus\mathbf c)$. Further suppose that $F$ takes $N$ distinct values each with pre-image approximately $2^{m+1}/N$ in size. Then running the inner step of Simon's algorithm is expected to return a vector $\mathbf w_0$ with $\mathbf w_0\cdot\mathbf c=0$ conditioned on $\mathbf w_0\neq\mathbf 0$ with probability at least 
$$\frac N{N-1}\left(\frac 12+\frac{\# S}{2^{m+2}}-\frac1N\right).$$
\end{heuristic}

Note that in the case $N=2$ this is the same probability as in the theorem.

\subsection{Simon's algorithm as a source of more samples}

One interpretation of theorem \ref{th:simon} with respect to our application is that applying Simon's algorithm to a suitable tagging function will produce additional samples for our LPN problem (albeit with a different noise probability $p$). Although adding to our existing collection of samples does not sound very impressive, it is potentially very powerful.

Suppose the highly optimistic situation where we can find a suitable covering code for our LPN problem, so that our Simon measurements are orthogonal to $(\mathbf s,1)$ with probability at least $p$. We could create roughly $n$ additional samples and then down select to $n$ samples which do not have $s_1$ as part of their support. This would be a LPN problem with one fewer variable but with the same number of samples and same noise rate. We could iterate our approach using Simon's algorithm on our smaller problem and so eliminate one variable at a time. After $O(mn)$ inner Simon measurements, we would have a system in $O(1)$ variables which could be easily solved, provided that we have not introduced too many other orthogonal vectors to our Simon measurement due to weakening the ``only if'' guarantee.

Even if the noise rate does grow, we can increase the number of measurements at each level and hope to offset the noise increase with a  larger set of samples.

The question now arises how well we can choose a covering code to go with our initial set of samples, so that the tagging process is likely to be constant on pairs of $(\mathbf u,v)$ that differ by the low Hamming weight vector $(\mathbf s,1)$.

\section{Best endeavours to honour Simon's promise with covering codes} 

\subsection{The repetition code}
The simplest example of a covering code is the repetition code which consists of two code words: the all zeroes and the all ones codewords. The decoding algorithm is very clear for covering $n$ dimensional space for odd $n$: we take the Hamming weight/population count of a point and assign the tag ``1'' if the value is greater than $n/2$ and the tag ``0'' if the values is less than $n/2$. For even $n$ we require a tie-breaking mechanism for the case when the weight/popcount is exactly $n/2$. We can do this for example by designating a single bit position (e.g. the least significant bit) as the tie-breaker so that points with weight $n/2$ and least significant bit 1 are tagged ``1'' and those with weight $n/2$ and least significant bit 0 are tagged ``0''. This can also be generalised to an affine repetition code where all binary strings are XORed with a fixed $n$-long constant string $\mathbf z$. The probabilistic analysis of promise-breaking is the same, though randomisation of the code might help remove certain non-causal or dependent outcomes.

We now consider the faithfulness of the repetition code to Simon's promise in the setting of a LPN problem. A single member of the column span of $A$ can be treated as a uniform random element of $\mathbb F_2^n$ with Hamming weight binomially distributed with parameters $n,1/2$. Similarly, we can treat the error vector $\mathbf e$ as having Hamming weight binomially distributed with parameters $n,p$. For convenience, we consider the simple case where $n$ is odd. The probability of selecting an element of the column span is $2^{-n}$. We divide the bits of a garbled binary string into four types: $e_0$ of the bits are fixed to zero and unchanged by the noise; $e_1$ of the bits are fixed to 1 and unchanged by the noise; $f_0$ of the bits are set to 0, but flip to 1 when noise is added; $f_1$ of the bits are set to 1 but flip when noise is added. We see that Simon's promise is broken for the repetition code for garbled strings whenever either $e_0+f_0<n/2<e_0+f_1$ or $e_0+f_0>n/2>e_0+f_1$. We can write the overall probability of promise-breaking for odd $n$ as the multinomial expression
$$\frac1{2^n}\sum_{e_0+e_1+f_0+f_2=n\atop e_0+f_0<n/2<e_0+f_1\ {\rm or}\ e_0+f_0>n/2>e_0+f_1}\binom n{e_0,e_1,f_0,f_1}p^{f_0+f_1}(1-p)^{e_0+e_1},$$
though this is not simple to estimate.

We can prove that random, light, fixed-weight garbling of a random binary string is quantifiably unlikely to change the repetition covering word, though the estimates in the proof may be too crude for practical estimates of algorithmic behaviour.

\begin{theorem}\label{th:repetition}
Given a binary string of length $n$ sampled uniformly at random, and a garbling (bit flip) of $k<n/2$ positions in that string, the original string and the garbled string are covered by the same $n$-long repetition codeword with probability at least
$$1-\frac{1+o(1)}{\frac12-\frac kn}\sqrt\frac{k\log(\pi n)}{\pi n}.$$
\end{theorem}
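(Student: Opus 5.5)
The plan is to condition on which positions are garbled and reduce the question to two independent binomial variables. Fix the set $K$ of the $k$ garbled positions; since the string $\mathbf x$ is uniform, the conclusion for arbitrary fixed $K$ implies it for random $K$. Let $X$ be the number of ones of $\mathbf x$ inside $K$ and $Y$ the number of ones outside $K$. Then $X\sim\mathrm{Bin}(k,\tfrac12)$ and $Y\sim\mathrm{Bin}(n-k,\tfrac12)$ are independent. The original weight is $Y+X$ and the garbled weight is $Y+k-X$.

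Take $n$ odd first. The covering codeword changes exactly when $n/2$ lies strictly between $Y+X$ and $Y+k-X$. Equivalently, $Y$ lies in an interval of real length $|k-2X|$ located just below $n/2-\min(X,k-X)$. That interval contains at most $|k-2X|$ integers. For even $n$ the least-significant-bit tie-break admits at most one further value of $Y$, which I expect to be absorbed into the $o(1)$ term.

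Conditioning on $X$ and using independence, the failure probability is at most
$$\mathbb E\bigl[\,|k-2X|\,\bigr]\cdot\max_j \Pr(Y=j).$$
I will bound the two factors separately. For the first, Cauchy--Schwarz gives $\mathbb E|k-2X|\le\sqrt{\mathrm{Var}(2X)}=\sqrt k$. For the second, Stirling's formula gives
$$\max_j\Pr(Y=j)=\binom{n-k}{\lfloor (n-k)/2\rfloor}2^{-(n-k)}=(1+o(1))\sqrt{\frac{2}{\pi(n-k)}}.$$

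Multiplying, the failure probability is at most
$$(1+o(1))\sqrt{\frac{2k}{\pi(n-k)}}=(1+o(1))\sqrt{\frac{2}{1-k/n}}\,\sqrt{\frac{k}{\pi n}}.$$
It remains to compare this with the stated bound. Since $k<n/2$, we have $\sqrt{2/(1-k/n)}\le 2\le 1/(\tfrac12-\tfrac kn)$. Also $\log(\pi n)\ge1$ for $n\ge1$. Hence this bound is at most
$$\frac{1+o(1)}{\frac12-\frac kn}\sqrt{\frac{k\log(\pi n)}{\pi n}},$$
which gives the theorem, in fact slightly strengthened.

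The main obstacle I anticipate is the uniformity of the $o(1)$ in the local central limit / Stirling step when $n-k$ is not large, and making the even-$n$ tie-break contribution genuinely $o(\cdot)$ of the main term. The extra $2^{-(n-k)}\binom{\cdot}{\cdot}$ mass is $O(1/\sqrt n)$, which is small relative to $\sqrt{k/n}$ only when $k\ge1$; this is trivially the case.

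If the anti-concentration route proves awkward, I would instead follow what the form of the stated bound suggests. Use Hoeffding to restrict to $|k-2X|\le\sqrt{k\log(\pi n)}$, at the cost of a tail of order $(\pi n)^{-1/2}$. Then bound the probability that $Y$ falls in a window of that length near $n/2-k/2$ by window length times maximal point mass. Here the factor $1/(\tfrac12-\tfrac kn)$ would arise from controlling where the window sits relative to the mean of $Y$.
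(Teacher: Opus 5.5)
Your proof is correct, and it takes a genuinely different route from the paper's. The paper conditions on the original weight $n/2+d$. If $|d|\le t$ it pays $t$ times the largest point mass of $\mathrm{Bin}(n,\tfrac12)$. If $|d|>t$ the garbling must lower the weight by at least $d$, a hypergeometric tail that the paper controls by Hoeffding's comparison of sampling without and with replacement, followed by the Chernoff--Hoeffding bound. Balancing the two cases forces $t\asymp\sqrt{k\log(\pi n)}/(\tfrac12-\tfrac kn)$, which is exactly where the factors $\sqrt{\log(\pi n)}$ and $1/(\tfrac12-\tfrac kn)$ in the statement come from.

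You instead split the coordinates into garbled and untouched ones, so that $X$ and $Y$ are independent and failure is simply $Y$ landing in a window of at most $|k-2X|$ integers. No tail bound or threshold is needed, and you get failure probability at most $\sqrt{2k/(\pi(n-k))}\le 2\sqrt{k/(\pi n)}$ for odd $n$. This has the correct order $\sqrt{k/n}$, carries no logarithm, and stays below $\sqrt{2/\pi}$ even as $k\to n/2$. It is informative at realistic sizes: for the paper's $n=64$, $k=7$ the stated bound exceeds $1$, while yours is below $0.4$ even with the tie-break term. It holds for every fixed garbling set. It is also robust to constants. The central point mass is $\sim\sqrt{2/(\pi n)}$, not the $(\pi n)^{-1/2}$ used in the paper, so the paper's chain as written reaches the stated bound only up to a factor $\sqrt2$, whereas you have a factor $\sqrt{\log(\pi n)}$ to spare.

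What the paper's route buys in return is generality in the string. It uses only the point masses of the total weight and the hypergeometric law of the overlap between a uniformly random garbling set and the support. It therefore survives replacing the uniform string by a non-uniform one, such as a random codeword in the LPN application, where your independence of $X$ and $Y$ is no longer available.

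Two small points in your write-up. First, for even $n$ the extra value of $Y$ costs $\max_j\Pr(Y=j)=O(n^{-1/2})$. For bounded $k$ this is of the same order as $\sqrt{k/n}$, not smaller, so it is not absorbed by the $1+o(1)$. It is absorbed by the spare logarithm instead: for $k\ge1$ your bound is at most $(\sqrt k+1)/\bigl((\tfrac12-\tfrac kn)\sqrt{\pi n}\bigr)$, and $\sqrt k+1\le 2\sqrt k\le\sqrt{k\log(\pi n)}$ once $\pi n\ge e^4$ (the case $k=0$ is trivial). Second, the Stirling step raises no uniformity issue, since $n-k>n/2$. In fact $\binom{N}{\lfloor N/2\rfloor}2^{-N}\le\sqrt{2/(\pi N)}$ holds for every $N\ge1$, so for odd $n$ no $o(1)$ is needed at all.
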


\begin{proof}
For our string to change from being covered by the all ones repetition word to being covered by the all zeroes repetition word, at least one of the following must be true for any value of $t$ a) either the Hamming weight of the original string lies in the range $[n/2,n/2+t]$ or b) the Hamming weight has reduced by at least $d\ge t$ for a string with original Hamming weight $n/2+d$. 

In case a) we note that the probability of the original string having a particular Hamming weight is bounded by the probability of the most likely Hamming weight $[n/2]$, for which the probability is $\binom n{[n/2]}2^{-n}\sim(\pi n)^{-1/2}$ by Stirling's approximation. Thus the probability of case a) can be bounded by $(1+o(1))t/\sqrt{\pi n}$. 

In case b) we can treat the problem as sampling and summing $k$ values from $n$ without replacement, where $n/2+d$ of ``balls in our urn'' are labelled -1 and $n/2-d$ are labelled +1. By \cite{hoeffding1963} theorem 4, we can majorise Markov's inequality for convex functions of such sums with their sampling-with-replacement counterparts. Sampling and summing with replacement gives the distribution $2\mathrm{Bin}(1/2-d/n,k)-k$ where we can apply the Chernoff-Hoeffding theorem (\cite{hoeffding1963} theorem 1).
\begin{eqnarray*}\mathbb P\left(2\mathrm{Bin}\left(\frac12-\frac dn,k\right)-k\le-d\right)&=&\mathbb P\left(\mathrm{Bin}\left(\frac12-\frac dn,k\right)-\mu<\frac {-d}2+\frac {dk}n\right)\\&\le&\exp\left(-2\frac{d^2}k\left(-\frac 12+\frac kn\right)^2\right)\end{eqnarray*}

We note that this is decreasing in $d$ for and so we can simply take the bound for $d=t$. By choosing, 
$$t=(1+o(1))\frac{\sqrt{k\log(\pi n)}}{2\left(\frac12-\frac kn\right)}.$$
We have a probability of case a) bounded by $q=(1+o(1))\sqrt{k\log(\pi n)/\pi n}/(1-2k/n)$ and a probability of case b) bounded by $o(q)$. By symmetry the probability of a change from being covered by the all zeroes repetition word to being covered by the all ones code word is the same. We conclude that the theorem as stated is true.
\end{proof}

More pragmatically, we can experiment with simulation of parameter sets of interest.

\subsubsection{Repetition code data}
An initial experiment was performed based on a toy version of the McEliece cryptographic system. A random binary Goppa code over the vector space $\mathbb F_2^{64}$ was created from a degree 7 irreducible polynomial over $\mathbb F_{64}$. This leads to a rank $m=22$ code over $\mathbb F_{2^n}$ with $n=64$ where the Hamming distance between codewords is at least 15 (so that up to 7 errors may be corrected by a user who knows the binary Goppa structure). For comparison, 1000 instances of 22 vectors in $\mathbb F_2^{64}$ were also created to make random linear codes. Likewise, 1000 instances of $2^{22}$ random $\mathbb F_2^{64}$ vectors with no \emph{a priori} linear structure were also generated.

For each of 1000 trials, a random weight 7 noise vector was created. For each codeword in the Goppa code, the repetition cover (using the least significant bit for tie-breaking) for the Goppa codeword and for the Goppa codeword plus noise were compared (likewise for each codeword (plus noise) in the random linear code and also for each of the $2^{22}$ random strings (plus noise)). The results were as follows:

\smallskip
\begin{tabular}{|c||c|c|}
\hline
Source & Covering agreements & Covering disagreements \\
\hline
Goppa code & 3143560683 ($\approx 74.95\%$) & 1050743317 ($\approx 25.05\%$)\\
Random linear code & 3157709646 ($\approx 75.28\%$) & 1036594354 ($\approx 24.72\%$)\\
Random strings & 3157447613 ($\approx 75.28\%$) & 1036856387 ($\approx 24.72\%$)\\
\hline
\end{tabular}
\smallskip

The first observation is that there is definitely a measurable bias towards agreement in all three cases. Although less pronounced, the covering code appears to be less effective for the Goppa code. This difference is not simply a question of precision: out of the 1000 trials, the Goppa code showed less agreement than the random linear code in 973 cases. This is worth noting as a principle of code-based cryptography is that randomised structured codes such as a our Goppa example should be indistinguishable from random linear codes. It is likely that this distinguishablility is due to small parameter sizes, though we remark upon it nevertheless.

Our na\"\i ve repetition code gives us hope to produce more LPN samples with a quantum computer, but of equivalent quality (the fixed weight noise roughly corresponds to LPN with $p=7/64\approx 10.94\%$). Nevertheless, we might still hope to generate many new samples, consider only samples supported on a reduced set of variables and iterate with our larger $p$. To see how the method might evolve, another 1000 trials were run with noise generated as a 64-long weight 16 vector:

\smallskip
\begin{tabular}{|c||c|c|}
\hline
Source & Covering agreements & Covering disagreements \\
\hline
Goppa code & 2666562454 ($\approx 63.58\%$) & 1527741546 ($\approx 36.42\%$)\\
Random linear code & 2653936782 ($\approx 63.27\%$) & 1540367218 ($\approx 36.73\%$)\\
Random strings & 2679290329 ($\approx 63.88\%$) & 1515013671 ($\approx 36.12\%$)\\
\hline
\end{tabular}
\smallskip

As we might expect, the degradation in sample quality continues, though there is still bias towards agreement. The distinguishability of Goppa from random reverses bias, but still seems to be causal, with Goppa worse than random in 238 out of 1000 cases. A further experiment was run with fixed weight 23 noise:

\smallskip
\begin{tabular}{|c||c|c|}
\hline
Source & Covering agreements & Covering disagreements \\
\hline
Goppa code & 2415026689 ($\approx 57.58\%$) & 1779277311 ($\approx 42.42\%$)\\
Random linear code & 2423525449 ($\approx 57.78\%$) & 1770778551 ($\approx 42.22\%$)\\
Random strings & 2423589671 ($\approx 57.78\%$) & 1770714329 ($\approx 42.22\%$)\\
\hline
\end{tabular}
\smallskip

In this case the Goppa code was worse than the random code in 660 out of 1000 cases.

\subsection{First order Reed-Muller covering codes}

A more sophisticated covering code family is the Reed-Muller family (see e.g. \cite{cohen1997covering} chapter 9). These cover a space $\mathbb F_2^n$ where $n=2^m$ by treating elements as the truth table of a Boolean function with $m$-bit inputs. Covering codewords correspond to functions supported on low-degree monomials. In particular first order Reed-Muller covering codes (a.k.a Hadamard codes) correspond to the subspace of linear functions and so the tagging algorithm is to find the single best linear approximation to the Boolean function corresponding to the element to be covered. Again, we hope that this will be resilient to the addition of noise as computation of linear approximations is linear and Parseval's theorem means that low weight codewords have a strong linear approximation to the zero function.

The tagging algorithm for a type 1 Reed-Muller code is very simple and efficient. It is a greedy version of the butterfly algorithm for computation of linear approximations where we need only track the largest contributions at each stage. Thus

\begin{algorithm}
\caption{Hadamard decoding of $\mathbf w\in\mathbb F_2^n$}
\begin{algorithmic}
\State $i\gets m-1$, $\mathrm{cover} = 0^{\otimes m}$
\While {$i\ge 0$}
  \State $\mathrm{mask}\gets (0^{\otimes 2^i}1^{\otimes 2^i})^{\otimes 2^{m-i}}$
  \State $\mathrm{right}\gets\mathbf w \& \mathrm{mask}$
  \State $\mathrm{left}\gets(\mathbf w\gg 2^i)\&\mathrm{mask}$
  \State $\mathrm{tally}\gets\mathrm{popcount}(\mathrm{right}\oplus\mathrm{left})$
  \If {$\mathrm{tally}>n/2$}
    \State $\mathrm{cover} \gets\mathrm{cover}\oplus 1\ll i$
  \Else
     \If {$\mathrm{tally}==n/2$}
       \State $\mathrm{tiebreak}$
    \EndIf
  \EndIf
  \State $i\gets i-1$
\EndWhile
\end{algorithmic}
\end{algorithm}

\noindent Again, tiebreaking for elements corresponding to functions where two or more approximations tie for the greatest strength. This could be the $2^i$th bit of $\mathbf w$ for example. The code can be extended by one bit to include the sign of approximation.

Analysis of resilience of Simon's promise is less straightforward than for the repetition code. We want to know if the strongest linear approximation of a $m$-bit Boolean function is the same as for that function XORed with a $m$-bit Boolean function that is supported on a small set. We know that the strength of approximations either reinforces or cancels linearly and that the approximations of a characteristic function of small support are bounded by Parseval's theorem. Intuitively then, we expect the coefficients of approximations not to change much and for the largest one to have a strong chance to still be largest after perturbation. Rodier \cite{rodier2003nonlinearity} has performed analysis of the largest linear approximation of a random Boolean function, which should provide a good starting point for a Reed-Muller equivalent to theorem \ref{th:repetition}. In this work, rather than delve into the distribution of approximations, we instead experiment.

\subsubsection{First order Reed-Muller data}
With the same three generation methods of elements of $\mathbb F_2^{64}$, we cover the space with $128$ neighbourhoods according to the approximation with greatest magnitude and the sign thereof. We divide the data according to the Hamming weight of the XOR of the tags of the codeword with and without noise, thus Hamming weight zero corresponds to identical tags where Simon's promise if fulfilled. As before 1000 trials were used with noise of fixed weight 7.

\smallskip
\noindent\begin{tabular}{|c||ccc|}
\hline
XOR wt. & Goppa code & Random linear code & Random words\\
\hline
0 & 494672474 ($\approx 11.79\%$) & 494017471 ($\approx 11.78\%$) & 494021786 ($\approx 11.78\%$)\\
1 & 703940515 ($\approx 16.78\%$) & 713745539 ($\approx 17.02\%$) & 713744651 ($\approx 17.02\%$)\\
2 & 1215591225 ($\approx 28.98\%$) & 1218713012 ($\approx 29.06\%$)& 1218702399 ($\approx 29.07\%$)\\
3 & 1116662541 ($\approx 26.62\%$) & 1109473397 ($\approx 26.45\%$) & 1109428675 ($\approx 26.45\%$)\\
4 & 466195684 ($\approx 11.12\%$) & 464129647 ($\approx 11.07\%$) & 464154504 ($\approx 11.07\%$)\\
5 & 171895763 ($\approx 4.10\%$) & 169166012 ($\approx 4.03\%$) & 169197526 ($\approx 4.03\%$)\\
6 & 23105647 ($\approx 0.55\%$) & 22870118 ($\approx 0.55\%$) & 22868321 ($\approx 0.55\%$)\\
7 & 2240151 ($\approx 0.05\%$) & 2188804 ($\approx 0.05\%$) & 2186138 ($\approx 0.05\%$)\\
\hline
\end{tabular}
\smallskip

The small proportion of exact matches, although significantly greater than the uncorrelated expectation of $1/128\approx 0.78\%$ will only provide a useful $\mathbf w_0$ with probability 55.54\% by heuristic \ref{heu:simon}. This makes it less useful than the two-valued repetition code. Clearly though, the tagging is capturing a great deal of information beyond the first line of the table. This might be accessible using some other approach. Perhaps more general tags that list a set of strong approximations rather than a single approximation or a quantum covering code which maps to a superposition of tags would provide better leverage.

\subsection{Choosing a covering code based on the sampling}

In the covering codes discussed, there is a great deal of scope for choosing variations that we expect to behave similarly on random elements. For the repetition code, we mentioned the affine variant where words are XORed with a fixed string before we compute the Hamming weight; for other codes we note that we expect covering properties to be preserved by permutation of the locations of the codewords. Randomisation could prove important for reducing non-causal nullspace vectors, but we should also consider whether particular variations might provide better distinguishing for our particular set of samples. The hope has some basis, given that we know in many applications there is an ideal covering code that meets Simon's promise on all pairs of codewords and light garbles. This would be the concealed error-correcting code in the McEliece cryptosystem, which is intended to be unrecoverable. Thus, we have an existence proof of an ideal cover, though one we cannot hope to use. We might hope however, that we can choose improved covers based on a limited analysis of $A$, lining up many codewords of the covering code with elements from the span of $A$ so that they lie in the centre of the covering neighbourhoods.

We recall that the repetition code seemed less effective against the Goppa code and its garbling. This might be due to a bias in the Goppa codeword towards having  Hamming weight near $n/2$. This is plausible as structurally, the Goppa codewords have stronger separation than random words or even random linear codewords. This means that although there is the zero Goppa code word, we know that there will be no Goppa codewords of weight 1-14 in our example. One might hypothesise that a weaker effect causes Goppa codewords to cluster around weight $n/2$. Translating by a random XOR of a weight $n/2$ word would presumably remove this effect. More ambitiously, if we find positions in our samples which are set on the same variables (in other words the rows of $A$ that are identical), it would make sense to choose our translation in these positions to be the same. More weakly, we might hope for value in setting the bits of our translation identically on sets of positions where the rows of $A$ are close in the Hamming metric.

For the type 1 Reed-Muller code, we note that the generator matrix is the transpose of all $\mathrm{lg} n$ binary strings listed in counting order. If we choose a subset of $\mathrm{lg} n$ columns of $A$ and sort them, we get an approximation to the counting sequence. Permuting the positions so that the value in the subset is close to the counting order will give a subspace of the span of $A$ which is close to the centres of our cover and which we may hope improves the correlation. Of course, we have many choices of $\mathrm{lg} n$ linearly independent subsets of the span of $A$ that we could use in a similar way and we might invest effort in finding one particularly close to the full counting set of strings.

\section{A small simulation \label{sec:sim}}
We conclude this paper with a simulation of how the methods of this paper might perform on a quantum computer for a toy version of a code-based cryptosystem. Taking $n=32$, using \texttt{sagemath} we generate a binary Goppa code using an irreducible degree 5 polynomial over $\mathbb F_{32}$ and the full field of $\mathbb F_{32}$ elements. This gives us a basis of $m=7$ codewords spanning a space of 128 codewords. We futher generate a random weight 5 vector of length $n=32$ corresponding to the decoding bound of our code. We then take a random $\mathbf s\in\mathbb F_2^7$, calculate the corresponding sum of basis elements and add our fixed wieght vector to create the sample vector $\mathbf b$. Code to generate the Goppa code and the output used for the simulation are given in the appendices.

It is simple to compute for all 256 values of $(\mathbf u,v)$ the $\mathbb F_2^{32}$ long vector $A\mathbf u+v\mathbf b$ and then use an untranslated 32-long repetition code to tag each vector 0 (for those with Hamming weight less than 32) or 1 (for those with Hamming weight greater than 32) with a tie-break on the least significant bit of the vector. We find that 127 of our vectors are tagged 0 and 129 are tagged 1. Of our 256 vectors, we can use guilty knowledge of $\mathbf e$ to determine that 72 ($\approx 28\%$) are tagged differently to their translate by $\mathbf e$ and 184 $\approx 72\%$ are tagged identically. We can next classically compute the Walsh-Hadmard transform on the two sets of $(\mathbf u,v)$ values according to their tag.

Due to the odd number of vectors in each tag, none of the transform coefficients are zero. There are however 135 of the 512 coefficients that are equal to $\pm 1/256$. The coefficients of the zero approximation are of course 127/256 and 129/256. Other coefficients vary in size from $\pm 3/256$ up to $\pm 39/256$. The largest coefficient on a bad approximation (one such that $\mathbf w\cdot(\mathbf s,1)=1$) is $29/256$ and the largest coefficient on a good approximation is $\pm 39/256$. Overall the probability of sampling a non-zero good approximation is $23550/65536\approx 35.9\%$ and the chance of sampling a bad approximation is $9216/65536\approx 14.1\%$, giving a proportion of good approximations around $71.8\%$ which is in agreement with the $72\%$ that we might hope for from heuristic \ref{heu:simon}. 

We use {\texttt{/dev/urandom} to randomly generate a 16-bit number and then index into the cumulative probability distribution function of the approximations to simulate the measurement. We discard zero samples, but also reject samples whose support includes either of the two least significant bits. This allows us to reduce our system to five variables. Our original system includes seven samples supported on the five most significant bits, and we generate (say) 25 new samples, hoping that the reduction in the number of variable offsets the increase in noise.

In addition to our original seven samples \begin{footnote}{Note that due to endianniess, the least signficant bit of the 6-bits corresponds to a coefficient $a_{i,3}$ of $s_3$, the next least a coefficient $a_{i,4}$ of $s_4$ and so on, with the most significant bit corresponding to the entry $b_i$ in the sample vector $\mathbf b$.}\end{footnote}
\begin{verbatim}
100001 000010 100001 000100 101000 010000 010010
\end{verbatim} 
our simulation produced the following samples:
\begin{verbatim}
111101 010001 111100 111110 100001 100011 010001 100001 
101101 111101 100001 010010 001010 100001 100000 010010 
000100 011011 011011 010010 011001 100011 100001 000110 
100001
\end{verbatim} 
We elect not to deduplicate samples as we have stronger confidence in the more likely samples.

We take a naive approach to solving the five variable system. Taking the 32-long binary strings from the five least significant bit positions:
\begin{verbatim}
10100001100111111100100011011101
01000010001010000011001011101010
00010001011000011000000100000010
00001001011000011001000011010000
00000111111001001010001011110000
\end{verbatim}
we seek a linear combination which is close to the 32-long string from the most significant bits of our samples:
\begin{verbatim}
10101001011110111100110000001101
\end{verbatim}

Exhausting all possibilities, we find that the closest combination is from the XOR of the first and fourth strings
\begin{verbatim}
10101000111111100101100000001101
\end{verbatim}
which differs in 7 places (other combinations differing in at least 9). We conclude with confidence that in our original system $s_3=1$, $s_4=0$, $s_5=0$, $s_6=1$ and $s_7=0$. With guilty knowledge, we see that this is indeed the case. Back substituting into our original samples and exhausting over the four possibilites leads us to further conclude with confidence that $s_1=1$ and $s_2=0$, again, our guilty knowledge tells us that this is indeed the case.

This small examples, give us confidence that our quantum approach offers an improvement over na\"\i ve `` exhaust and score'' methods for solving LPN. It is not clear whether more sophisticated methods such as information set decoding \cite{prange1962use}, \cite{stern1988method}, \cite{bernstein2008attacking} will benefit from the additional relations of our method as they are much more sensitive to the noise rate than exhaustion. An interesting question is whether the availability of essentially unlimited, additional, independent samples of lesser quality enables LPN attacks superior to those currently known.
\section{Summary}

We have demonstrated how applying a covering code to a superposition of the span of samples of a LPN space both with and without fixed noise creates a state with a ``best endeavours'' attempt to keep Simon's promise. We have shown that Simon's algorithm could then be used to create additional samples based on the same hidden value, but probably different noise probability. These additional samples could make the system easier to solve (e.g. by generating many new samples and discarding those not supported on a smaller set of variables). We have performed limited experimentation to show that some na\"\i ve covering codes should provide approximations to Simon's promise that might be practically useful.

We make no claim that these methods will necessarily be competitive with existing approaches, merely that they warrant deeper investigation.

\bibliographystyle{plain} 
\bibliography{qlpn} 

\appendix

\section{\texttt{sagemath} to make Goppa code}
\begin{verbatim}
F = GF(2^5)
R.<x> = F[]

g = R.random_element(5)
while g.is_irreducible() == false:
    g = R.random_element(5)

g=g/g.coefficient(5)

C = codes.GoppaCode(g,F.list())

e_set = Subsets(32,5).random_element()
e_vec=vector([0]*32)
for i in e_set:
    e_vec[i-1]=1

cw = random_vector(GF(2),7)
lpn = cw*C.generator_matrix() + e_vec

for i in range(7):
    print(hex(ZZ(list(C.generator_matrix()[i]),base=2)))

print(hex(ZZ(list(lpn),base=2)))

print("Secrets")
print("g: "+ str(g))
print("e_set: " + str(e_set))
print("e_vec: " + str(e_vec))
print("cw: " + str(cw))
\end{verbatim}

\subsection{Example output - used in section \ref{sec:sim}}
\begin{verbatim}
[1 0 0 0 0 0 1 0 0 0 1 1 1 1 1 1 1 0 0 1 1 0 1 1 1 0 0 1 0 0 0 1]
[0 1 0 0 0 0 0 0 1 0 1 0 0 0 0 0 0 1 1 1 1 1 1 1 1 1 1 1 0 1 1 0]
[0 0 1 0 1 0 0 0 1 0 0 1 0 0 1 0 1 0 0 1 0 1 1 1 0 0 0 0 0 1 1 0]
[0 0 0 1 0 0 0 0 0 0 0 0 0 1 1 1 0 0 0 0 1 1 0 0 1 0 1 1 1 1 1 1]
[0 0 0 0 0 1 1 0 0 0 1 0 1 0 1 1 0 1 1 0 1 1 1 1 1 0 1 1 0 1 0 1]
[0 0 0 0 0 0 0 1 1 0 0 0 0 1 0 0 1 1 0 1 0 0 0 1 0 1 0 1 0 1 1 1]
[0 0 0 0 0 0 0 0 0 1 1 0 0 1 1 0 1 1 1 1 0 1 1 0 0 1 0 1 1 0 1 0]
(0, 1, 1, 0, 1, 0, 1, 1, 0, 0, 1, 0, 0, 1, 0, 1, 1, 1, 0, 1, 1, 1, 1, 1, 1, 1,
 0, 0, 0, 0, 0, 0)
0x89d9fc41
0x6ffe0502
0x60e94914
0xfd30e008
0xadf6d460
0xea8b2180 
0x5a6f6600
0x3fba4d6
Secrets
g: x^5 + x^4 + (z5^3 + z5^2 + z5 + 1)*x^3 + (z5^4 + z5)*x^2 + 
(z5^4 + z5^3 + z5 + 1)*x + z5^4 + 1
e_set: {1, 2, 13, 14, 23}
e_vec: (1, 1, 0, 0, 0, 0, 0, 0, 0, 0, 0, 0, 1, 1, 0, 0, 0, 0, 0, 0, 0, 0, 1, 0, 
0, 0, 0, 0, 0, 0, 0, 0)
cw: (1, 0, 1, 0, 0, 1, 0)
\end{verbatim}

\section{C code to simulate Simon's algorithm sampling}
\begin{verbatim}
#include <stdint.h>
#include <stdio.h>
#include <stdlib.h>
#include <nmmintrin.h>

uint32_t goppa[]={0x89d9fc41,                         /* Hard-coded
                                            Goppa code from sagemath */
0x6ffe0502,
0x60e94914,
0xfd30e008,
0xadf6d460,
0xea8b2180,
0x5a6f6600,
0x3fba4d6};

uint32_t e = 1 ^ 1<<1 ^ 1<<12 ^ 1<<13 ^ 1<<22;        /* Error vector 
                                            used in generating 0x3fba4d6 */
uint32_t s = 0xA5;                                    /* Secret vector 
                                            used in generating 0x3fba4d6 */

uint8_t repetition_cover(uint32_t word32) {           /* Untranslated 
                                            repetition code */
 uint32_t tally;
 uint8_t cover_tag = 0;
 
 tally = __builtin_popcount(word32);
 if (tally > 16)                                      /* Tag high 
                                            popcounts with 1 */
  cover_tag = 1;
 if (tally ==16) 
  cover_tag = word32 & 1;                             /* Tie break balanced 
                                            words using least significant bit */
 
 return cover_tag;
}

int main() {
 uint32_t gcodeword, index; //, type0=0, type1=0;
 int32_t indices[2][256]={0};
 uint32_t i, j, k, total, trials, sample, good=0;
 FILE * fp = fopen("/dev/urandom","r");

 if (fp==NULL) {
  printf("Failed to open /dev/urandom\n");
  exit(-1);
 }

 gcodeword = 0;
 index = 0;
 for(j=0;j < 256;j++) { 
  indices[repetition_cover(gcodeword)][index]++;
  gcodeword ^= goppa[__builtin_ctz(j)];               /* Gray code stepping */
  index ^= 1<<__builtin_ctz(j);
 }

 for(i=128;i>0;i=i>>1)                                /* Butterfly computation 
                                            of the Walsh-Hadamard transform */
  for(j=0;j<128/i;j++)
   for(k=0;k<i;k++) {
    indices[0][2*i*j+k] = indices[0][2*i*j+k] + indices[0][2*i*j+i+k];
    indices[0][2*i*j+i+k] = indices[0][2*i*j+k] - 2*indices[0][2*i*j+i+k];
    indices[1][2*i*j+k] = indices[1][2*i*j+k] + indices[1][2*i*j+i+k];
    indices[1][2*i*j+i+k] = indices[1][2*i*j+k] - 2*indices[1][2*i*j+i+k];
    } 

 total = 0;                                           /* Convert the Hadamard 
                                            coefficients into sample probability 
                                            numerators */
 for(i=0;i<256;i++){
  indices[0][i] = indices[0][i]*indices[0][i];
  indices[0][i] += indices[1][i]*indices[1][i];       /* Combine the sample 
                                            probability numerators from both 
                                            tags */
  // total += indices[0][i];                          /* Sanity check that 
                                            numerators sum to 2^{2(m+1)} */
  // if (__builtin_popcount(i&s)%2==0)
  //  type0 += indices[0][i];                         /* Use guilty knowledge 
                                            to check probability of sampling 
                                            a Simon sample */
  // else 
  //  type1 += indices[0][i];                         /* Use guilty knowledge 
                                            to check probabiltiy of sampling 
                                            a non-Simon sample */
 }

 total = 0;
 good = 0;
 for(trials = 0; trials <500; trials++) {             /* Simulate measurement 
                                            using our CDF */
  sample = (uint32_t)fgetc(fp);
  sample = sample<<8 ^ (uint32_t)fgetc(fp);           /* Random 2^16 value */
  i=0;                                                /* Start with 
                                            coefficient 0 with cumulative 
                                            probability 0 */
  index = 0;
  while (i<256){
   index += indices[0][i];                            /* Increment CDF 
                                            by probability of measuring index */
   if (index > sample) break;                         /* Select the case 
                                            where the cumulative probability 
                                            exceeds the random value */
   i++;
  }
  if ((i!=0)&&((i&3)==0)) {                           /* Discards 0 samples 
                                            and samples whose support 
                                            overlaps with the 2 LSBs */
   total++;                                              
   if (__builtin_popcount(i&s)%2==0) good++;          /* Use guilty knowledge
                                             to track the number of good samples */
   printf("Sample %x trials %d total %d good %d\n", i, trials, total, good);
  } 
  if (total==25) break;                               /* Hard wired to stop 
                                            after getting 25 new samples */
 } 

 fclose(fp);
 return 0;
}

\end{verbatim}

\end{document}